\newtheorem{thm}{Theorem}
\newtheorem{prop}{Proposition}
\newtheorem{lem}{Lemma}
\begin{document}

\title{Multi-Channel Random Access with Replications}

\author{\IEEEauthorblockN{Olga Galinina\IEEEauthorrefmark{1}, Andrey Turlikov\IEEEauthorrefmark{2}, Sergey Andreev\IEEEauthorrefmark{1}, and Yevgeni Koucheryavy\IEEEauthorrefmark{1}\\}
 \IEEEauthorblockA{
   \IEEEauthorrefmark{1}Tampere University of Technology, Tampere, Finland}
 \IEEEauthorblockA{
   \IEEEauthorrefmark{2}State University of Aerospace Instrumentation, St. Petersburg, Russia}
}

\maketitle

\begin{abstract}

This paper\footnote{This work is supported by the projects TAKE-5, TT5G, and WiFiUS as well as by RFBR (research project No. 17-07-00142). 
The work of the first author is supported with a personal research grant by the Finnish Cultural Foundation and by a Jorma Ollila grant from Nokia Foundation.} considers a class of multi-channel random access algorithms, where contending devices may send multiple copies (replicas) of their messages to the central base station. We first develop a hypothetical algorithm that delivers \textit{a lower estimate} for the access delay performance within this class. Further, we propose a \textit{feasible} access control algorithm achieving low access delay by sending multiple message replicas, which approaches \textcolor{black}{the performance of the hypothetical algorithm}. The resulting performance is readily approximated by a simple lower bound, which is derived for a large number of channels.


\end{abstract}


\section{Introduction}
%
 %
In wireless systems, random access (RA) algorithms are used primarily to arbitrate connection setup of dynamic device population over a shared communications medium. In particular, multi-channel RA algorithms have been successfully applied in modern wireless networks, where contenders may transmit on several orthogonal time-, frequency-, or code-based channels. Today, multiple variations of such solutions have been developed and thoroughly studied~\cite{choi2006multichannel}.

Recently, with the rapid proliferation of mission-aware industrial applications and the corresponding communications enablers, there is increased interest in the research community to revisit multi-channel RA algorithm design for improved transmission reliability. To facilitate reliable message delivery in machine-type communications (MTC), it has been proposed to exploit redundancy by sending multiple copies (replicas) of the same message~\cite{paolini2015coded}.

Another pressing demand in mission-control industrial operation is to reduce the channel access delay for MTC devices. Although delay evaluation for certain multi-replica formulations was addressed earlier in the context of satellite systems with one~\cite{hajek1994delay} or several channels~\cite{choudhury1983diversity}, \textit{dynamic centralized access control} of such systems has not been attempted due to substantial feedback delay.




In this work, we focus on a class of \textit{multi-channel} wireless RA algorithms that exploit simultaneous transmission of multiple message replicas. Assuming no interference cancellation or other similar means on the physical layer (such as those in~\cite{taghavi2016design}, \cite{paolini2015coded}), we aim to analyze the number of backlogged (ready to transmit) MTC devices and thus the channel access delay. Considering error-prone radio channel, we characterize the best achievable system performance by outlining a hypothetical RA algorithm, which minimizes the collision probability at any message transmission opportunity. 
   
The rest of this text is organized as follows. The main assumptions of our system model are summarized in Section II. Section III formulates our hypothetical algorithm for estimating the channel access delay, which is minimal within the considered class of RA algorithms. Section IV describes feasible practical solutions, while Section V derives the lower bounds on the number of backlogged MTC devices by quantifying the respective performance limits. Finally, we offer some numerical results in Section VI and conclude.

\section{System model}
\vspace{-3px}
\subsection{Main assumptions}
We consider a centralized radio access system, where MTC devices activate irregularly and attempt to establish a connection with their serving base station (BS). In doing so, they transmit short messages by employing a time-slotted contention-based algorithm. For the sake of analytical tractability, let new devices activate over time randomly, by following a Poisson process with the intensity of $\Lambda$ (i.e., an infinite population of MTC devices is assumed). Upon its activation, a particular device may decide to transmit during a slot $t$ with the probability $p_{t}$, in which case it selects $K_t \geq 1$ out of $M$ available channels for sending either a one-slot message (if $K_t = 1$) or several identical replicas of this message. Generally, while the notion of a ``channel'' may involve time-, frequency-, or code-division structure, to minimize the initial access delay we here imply either code- or frequency-based division, so that multiple message replicas are transmitted at the same time. 

At the end of a transmission time interval, the MTC devices receive error-free feedback from the BS and may also observe the outcomes of actions by other contenders. Within a particular slot $t$, we differentiate between the following events in the channel: (i) \textit{idle time} when no MTC device attempts to access the channel, (ii) \textit{potential successful replica delivery} if only one device attempts to transmit on the channel, and (iii) \textit{collision of replicas} if two or more devices choose to transmit on the channel. In the latter case, we assume that no message is received successfully, and thus all of the colliding contenders fail in their replica transmissions. 

Further, collisions in our model are not the only source of replica transmission failures. We also assume that the radio channel is error-prone: the replicas that have avoided a collision may still be received incorrectly by the BS with the probability of $\gamma$, and therefore be considered lost. In this work, we do not account for any interference cancellation mechanisms or other packet recovery procedures on the physical layer. Finally, if all of $K_t$ replicas sent by a certain MTC device are lost, the transmission of the corresponding message fails, whereas the device in question becomes \textit{backlogged} and attempts to retransmit in the consecutive slot $t+1$ with the probability of $p_{t+1}$ and the number of replicas $K_{t+1}$. If however at least one of the replicas is transmitted successfully, then the message is assumed to be delivered and the respective MTC device deactivates permanently.

The (re)transmission probability $p_t$ and the allowed number of replicas $K_t$ may be announced by the BS and affect the resulting channel access delay as well as the overall system performance. 
The appropriate choice of these two system control parameters becomes the focus of our subsequent study.

\subsection{Possible extensions to the model}
We note that in some practical systems, the feedback from the BS may be delayed and/or imperfect, which may require the consideration of a feedback interval as well as a certain feedback transmission failure probability. Both of these formulations constitute separate research problems, which could be addressed by extending our proposed approach. 

In what follows, in order to derive our practical control algorithm we assume that the BS observes and may thus estimate separately (i) the number of idle channels $i$, (ii) the number of replica transmissions that have successfully avoided collisions $s$, (iii) the number of channels that experience a collision $c$ ($i+s+c = M$), as well as (iv) the total number of successful MTC devices $N_s$. In some circumstances, the BS may not be capable of differentiating between the collisions and the transmissions without collisions that are however lost due to the error-prone channel. This may generally depend on the properties of the physical layer implementation in the communications equipment. In this situation, one may readily reformulate our proposed control scheme accordingly, but the resulting performance of such less informed system would naturally degrade.

Moreover, some practical implementations may decide to dynamically adjust the number of channels that are available for the transmissions from MTC devices in slot $t$. Hence, we do not limit our formulations by assuming that the number of channels $M$ is constant. On the contrary, our target control scheme is equally suitable to capture the time-variant channel availability, where $M_t$ channels are available in slot $t$ as announced by the BS.


\section{Hypothetical control algorithm}

In this section, we introduce a hypothetical (optimal) control procedure that guarantees the minimal channel access delay within the class of the considered RA algorithms (i.e., those controlling the pair $<p_t, K_t>$ according to the system model assumptions outlined above). For comparison, we also refer to the optimal scheme for the conventional access systems without message transmission redundancy (i.e., those controlling only $<p_t>$). For better readability, hereinafter we adopt the following naming convention. We denote the optimal algorithms which require certain hypothetically available information on the system state as \textbf{HK} and \textbf{H1}, while their feasible practical implementations proposed in Section~IV are referred to as \textbf{AK} and \textbf{A1} for multi-replica and single-replica transmission, respectively.

In order to derive the optimal algorithms that deliver the lower delay estimates, let us assume that for every slot $t$ the number of MTC devices $N_t$, which have activated and decided to transmit, is known to the BS. For the single-replica systems, the optimal control scheme (referred to as the \textit{hypothetical algorithm H1} in what follows) may be formulated as Proposition~1 adapted from~\cite{galinina2013stabilizing}:

\begin{prop} \textbf{Algorithm H1.}
For the system with no redundancy ($K_t = 1$), the optimal algorithm H1 that minimizes the access delay corresponds to the transmission probability $p_t = \min \left(1,{M}/{N_t}\right)$, where $N_t$ is the number of currently active MTC devices and $M$ is the number of channels.
\end{prop}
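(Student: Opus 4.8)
The plan is to reduce the delay-minimization problem to a per-slot throughput-maximization problem, and then solve the latter by elementary calculus. First I would observe that, since new arrivals form an exogenous Poisson process of intensity $\Lambda$ that is independent of the control, and since the number of devices that successfully depart in slot $t$ depends only on the current backlog $N_t$ and the chosen probability $p_t$, the backlog obeys a recursion of the form $N_{t+1} = N_t - D_t(p_t) + A_t$, where $D_t$ counts the departing (successful) devices and $A_t$ the fresh arrivals. A myopic policy that maximizes $\mathbb{E}[D_t \mid N_t]$ in every slot stochastically minimizes the backlog process, and hence, for a fixed arrival intensity, the mean access delay; this is the adaptation from~\cite{galinina2013stabilizing} invoked in the statement. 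It therefore suffices to find the $p_t$ that maximizes the expected number of collision-free single-replica transmissions given $N_t$ active devices.

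Second, I would compute this expectation explicitly. Each of the $N_t$ devices independently transmits with probability $p_t$ and, conditioned on transmitting, picks one of the $M$ channels uniformly, so a fixed channel is selected by any particular device with probability $p_t/M$. Hence the number of devices landing on a given channel is $\mathrm{Bin}(N_t, p_t/M)$, and the probability that the channel carries exactly one replica is $N_t (p_t/M)(1-p_t/M)^{N_t-1}$. Because a collision-free replica is additionally lost with probability $\gamma$ and $(1-\gamma)$ is a control-independent constant, maximizing $\mathbb{E}[D_t]$ is equivalent to maximizing the expected number of singleton channels, which by linearity of expectation over the $M$ channels equals
\[ S(p_t) = N_t\, p_t \left(1 - \frac{p_t}{M}\right)^{N_t - 1}. \]

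Third, I would maximize $S$ over $p_t \in [0,1]$. Differentiating and factoring out the common positive term $(1-p_t/M)^{N_t-2}$ gives
\[ \frac{dS}{dp_t} = N_t \left(1 - \frac{p_t}{M}\right)^{N_t - 2}\left(1 - \frac{N_t\, p_t}{M}\right), \]
whose unique interior stationary point is $p_t = M/N_t$, with $S$ increasing for $p_t < M/N_t$ and decreasing above it. When $N_t \ge M$ this root lies in $[0,1]$ and is the maximizer; when $N_t < M$ the root exceeds one and $S$ is increasing throughout $[0,1]$, so the constrained maximizer is $p_t = 1$. Combining the two cases yields $p_t = \min\!\left(1, M/N_t\right)$, as claimed.

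The main obstacle is the first step rather than the calculus: one must rigorously justify that per-slot (myopic) throughput maximization is globally optimal for the delay objective. The cleanest route is a sample-path coupling showing that, against any competing policy, the greedy policy's backlog is stochastically dominated in every slot; care is needed because $D_t$ is a random variable (not merely its mean) and the channel-selection randomness must be coupled consistently across the compared policies. Once this reduction is granted, the remaining optimization is routine.
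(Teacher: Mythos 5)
The paper never actually proves this proposition --- it is stated as an adaptation of the optimal single-replica control from~\cite{galinina2013stabilizing}, with no proof attached --- so your derivation fills a gap rather than parallels an internal argument. Its computational core is correct: with each of the $N_t$ devices transmitting independently with probability $p_t$ and picking a channel uniformly, the occupancy of a fixed channel is binomial with parameters $(N_t, p_t/M)$, the expected number of departures is $(1-\gamma)S(p_t)$ with $S(p_t)=N_t\, p_t\,(1-p_t/M)^{N_t-1}$, and your factorization $S'(p_t) = N_t(1-p_t/M)^{N_t-2}(1-N_t p_t/M)$ correctly yields the constrained maximizer $\min(1, M/N_t)$; moreover, your observation that the factor $(1-\gamma)$ multiplies out of the objective is precisely the reason the paper can assert that H1 is optimal for error-prone as well as error-free channels. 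The genuine unfinished step is the one you flag yourself: the claim that maximizing $\mathbb{E}[D_t \mid N_t]$ slot-by-slot minimizes the mean access delay. This is not automatic; note, for instance, that under the greedy rule the expected throughput is unimodal in the backlog (increasing up to $N_t = M$, then decaying toward $M e^{-1}$), so a smaller backlog does not always imply faster instantaneous service, and the sample-path coupling you sketch must be constructed with care to get stochastic dominance of the backlog process. The paper, however, is in no better position: it inherits exactly this myopic-optimality reduction from the cited reference, so your proof meets the paper's own standard of rigor on that point while being strictly more explicit about the optimization that determines the form of $p_t$.
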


Importantly, the above control procedure delivers the \textit{lower estimate} for the channel access delay within the class of control algorithms with no redundancy. We also emphasize that the conventional algorithm H1 is optimal for \textit{both} the error-free channel case with $\gamma = 0$ and the error-prone channel case with $\gamma>0$. 

Further, if $N_t<M$, then sending multiple replicas may increase the probability of successful replica reception at every transmission opportunity. Given the number of transmitting MTC devices $N_t$, the number of channels $M$, and by estimating the replica corruption probability $\gamma$, we may derive the optimal number of replicas that maximizes the probability of successful message delivery over $K$ for a given $N_t,M$ as:
\begin{equation}
\vspace{-5px}
\begin{array}{c}
K^{*} = \arg \max_{K} p_s(N_t,M,\gamma, K), 
\end{array} \label{eqn:opt}
\end{equation}
where $p_s(N_t,M,\gamma, K) = \Pr\{\text{successful message delivery}\}$.

Hence, for \textit{multi-replica} systems where $K_t \geq 1$, we employ a similar technique as has been used for the conventional systems with no redundancy. Therefore, we establish the optimal control parameters in question as (i) $K_t=1$, $p_t = {M}/{N_t}$ if $N_t>M$, or, otherwise, (ii) $p_t=1$, $K_t = K^* = f(N_t,M,\gamma)$, where \textcolor{black}{the solution to the optimization problem (\ref{eqn:opt}) is the value taken from the preset table $f(N_t,M,\gamma)$, which can be provisionally calculated and stored in the BS. The derivations necessary for solving (\ref{eqn:opt}) (i.e., optimizing the successful delivery process) 
are summarized in Appendix}. Given all of the above, we may formulate our Proposition~2.\vspace{-5px}
\begin{prop} \textbf{Algorithm HK.}
For the multi-replica system with redundancy ($K_t \geq 1$), the optimal algorithm HK that minimizes the channel access delay returns the following set of control parameters: $p_t = \min\left(1,{M}/{N_t}\right)$, $K_t = f(N_t,M,\gamma)$.
\end{prop}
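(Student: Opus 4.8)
The plan is to reduce delay minimization to a per-slot (myopic) departure-maximization and then to solve that optimization in the two regimes $N_t>M$ and $N_t\le M$. First I would argue, exactly as for Algorithm~H1, that minimizing the long-run channel access delay is equivalent to minimizing the expected backlog $E[N_t]$: since arrivals form a Poisson process and $N_t$ is a controlled Markov chain whose one-step transition depends only on the current state and the chosen pair $\langle p_t,K_t\rangle$, Little's law ties the mean delay to the mean backlog, and the backlog is pushed down by maximizing the expected number of successful departures in each slot. This recasts the global control problem as follows: for each fixed $N_t$, choose $\langle p_t,K_t\rangle$ so as to maximize the expected number of devices whose message is delivered in slot~$t$.

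Next I would write this per-slot objective explicitly. Under transmission probability $p_t$ the set of contenders is a binomial thinning of the $N_t$ active devices, each placing $K_t$ replicas among the $M$ channels, so a tagged contender is delivered with probability $p_s(N_t,M,\gamma,K_t)$ and the expected number of deliveries is a function $D(p_t,K_t;N_t,M,\gamma)$ to be maximized. I would then split into cases. For $N_t>M$ (devices outnumber channels) I would show that redundancy is counterproductive: sending $K_t\ge 2$ replicas only loads the already-scarce channels and reduces the count of collision-free transmissions, so the optimum fixes $K_t=1$; with $K_t=1$ the objective $D$ collapses onto the single-replica throughput, which is concave in the offered load $N_tp_t$ and is maximized when the load matches the channels, i.e.\ at $p_t=M/N_t$, recovering Algorithm~H1. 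For $N_t\le M$ (spare channels) I would show that withholding a device can only postpone its own delivery without relieving congestion, so $p_t=1$ is optimal; given $p_t=1$, maximizing the per-device delivery probability $p_s(N_t,M,\gamma,K)$ over $K$ is by definition the optimization~(\ref{eqn:opt}), whose maximizer is $K^{*}=f(N_t,M,\gamma)$. Combining the two regimes yields $p_t=\min(1,M/N_t)$ and $K_t=f(N_t,M,\gamma)$ (with $f\equiv 1$ for $N_t>M$), as claimed.

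The step I expect to be the main obstacle is the first reduction, since myopic departure-maximization is not automatically globally delay-optimal. I would close this gap with a sample-path coupling: running two copies of the system from a common state under identical arrival and channel-error realizations, I would prove by induction on $t$ that the backlog under HK is pathwise no larger than under any competing admissible policy, exploiting the monotonicity of the state transition in the number of successful departures. By contrast, the two per-slot subproblems are comparatively routine --- the $N_t>M$ case is a standard slotted-ALOHA-type first-order/concavity argument, and the explicit determination of $K^{*}$ for $N_t\le M$ is the calculation deferred to the Appendix following~(\ref{eqn:opt}).
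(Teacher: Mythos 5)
Your proposal follows the same route as the paper, but the paper's own proof is far thinner than what you demand of yourself: it consists of a single sentence asserting that optimality ``follows from the choice of $K^*$ that is based on the highest message delivery probability.'' The regime split you derive (send $K_t=1$ with $p_t=M/N_t$ when $N_t>M$; send $p_t=1$, $K_t=K^*=f(N_t,M,\gamma)$ when $N_t\le M$) is not part of the proof at all --- it is stated in the text preceding the proposition, likewise without justification --- and the computation of $K^*$ is deferred to the Appendix exactly as you anticipate. So you have not missed anything the paper contains; rather, you correctly identify the gap the paper silently steps over: myopic per-slot maximization of the delivery probability is not automatically equivalent to minimizing long-run access delay, and the paper makes no attempt to bridge this.

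One caution on your proposed bridge. The sample-path coupling you sketch is unlikely to work as stated: two policies that place different numbers of replicas on different channels cannot share ``identical channel-error realizations'' in any meaningful way, since the randomness spaces (channel selections, per-replica erasures) do not align across policies, and pathwise dominance of the backlog is generally unattainable in such systems. Moreover, maximizing the \emph{expected} number of departures per slot does not by itself stochastically order the departure-count distributions, which is what an induction on the backlog would need. The standard repair is weaker machinery: a dynamic-programming interchange argument, or an induction establishing stochastic (not pathwise) dominance of the controlled chains, combined with monotonicity of the value function in $N_t$. Note also that your two ``routine'' per-slot subproblems --- that $p_t=1$ is optimal whenever $N_t\le M$ and that $K_t=1$, $p_t=M/N_t$ is optimal whenever $N_t>M$ --- are themselves only asserted in the paper (the first is inherited from the cited single-replica result, Proposition 1), so a fully rigorous treatment would need to prove those claims too; your concavity/first-order sketch for the $N_t>M$ case is the right direction.
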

\begin{proof}
The fact that the considered multi-replica algorithm HK outperforms all of the other algorithms, which control the system through the pair $<p_t,K_t>$, follows from the choice of $K^*$ that is based on the highest message delivery probability.
\end{proof}

The following Theorem~1 compares any single-replica channel access scheme and the algorithm HK (further illustrative examples for Theorem~1 are provided in Section~VI). 
\begin{thm}
The optimal multi-replica algorithm HK outperforms the algorithm H1 and thus any other single-replica algorithm in terms of the channel access delay.
\end{thm}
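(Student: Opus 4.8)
The plan is to reduce Theorem~1 to a per-slot comparison of delivery probabilities and then lift that comparison to a statement about the long-run backlog, which converts to delay via Little's law. The starting observation is that the single-replica control of H1 is merely a restriction of the admissible pairs $<p_t,K_t>$ to those with $K_t\equiv 1$. Since HK is, by Proposition~2, optimal over the full family of such pairs, and since $K=1$ is always a feasible point of the optimization~(\ref{eqn:opt}), the per-slot successful-delivery probability selected by HK in every state dominates the one used by H1. The chain ``HK beats H1, hence beats any single-replica scheme'' then closes because Proposition~1 already certifies H1 as the delay-minimizer within the single-replica class.

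Concretely, I would first model the backlog $N_t$ as a Markov chain driven by Poisson$(\Lambda)$ arrivals $A_t$ and state-dependent departures $D_t$, so that $N_{t+1}=N_t-D_t+A_t$. The key per-slot lemma to establish is that $E[D_t\mid N_t=n]$ under HK is at least as large as under H1 for every $n$. For $n>M$ the two algorithms set identical parameters ($K_t=1$, $p_t=M/N_t$), so the departure law coincides. For $n\le M$ both set $p_t=1$, but HK uses $K_t=K^*$ maximizing $p_s(n,M,\gamma,K)$ over $K$, whereas H1 is restricted to $p_s(n,M,\gamma,1)$; since the $n$ contending devices are exchangeable, the expected number of departures equals $n\,p_s$, so HK dominates, strictly whenever $K^*>1$ is optimal (which occurs for $\gamma>0$ and $n<M$).

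The second step is to promote this per-state departure dominance to dominance of the entire backlog process. I would couple the two systems on a common arrival stream and argue, by induction on $t$, that $N_t^{HK}\le N_t^{H1}$ along the coupled sample paths: whenever the chains occupy the same state HK removes at least as many devices, and whenever HK is already below H1 the monotonicity of the control $p_t=\min(1,M/N_t)$ in the backlog (fewer contenders never raise the collision probability) preserves the ordering. Stochastic dominance of the stationary backlog follows, and since both systems share the same arrival rate $\Lambda$, Little's law turns the smaller mean backlog directly into a smaller mean access delay, which is exactly the claim.

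I expect the main obstacle to be this coupling/monotonicity step rather than the per-slot inequality. The departures $D_t$ are not independent across devices because replicas share the $M$ channels, so the departure count is an intricate function of $(n,K,M,\gamma)$; verifying that its distribution is stochastically monotone in the state and that a single monotone coupling simultaneously enforces $N_t^{HK}\le N_t^{H1}$ for all $t$ is the delicate part. If a clean sample-path coupling proves awkward, a fallback is a drift comparison: HK exhibits a uniformly larger (more negative) one-step drift $E[N_{t+1}-N_t\mid N_t=n]$ in every state, and a standard stochastic-ordering argument then yields the smaller stationary mean backlog, again closing the proof through Little's law.
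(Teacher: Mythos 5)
Your first paragraph is, in fact, the paper's entire proof: the paper disposes of Theorem~1 in one line, ``Immediately follows from Proposition~1 and Proposition~2,'' which is exactly the nesting-plus-transitivity argument you open with (single-replica schemes are the $K_t\equiv 1$ slice of the class over which HK is optimal, and H1 is the best element of that slice). Everything after your first paragraph is work the paper never does, and the comparison is instructive. The paper's route is clean but hides a real logical gap: Proposition~2 is itself proved by a one-line appeal to per-slot maximization of the delivery probability, so the chain Prop.~1 $\to$ Prop.~2 $\to$ Theorem~1 nowhere establishes that per-slot (greedy) dominance of success probabilities implies dominance of the long-run backlog or delay. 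Your Markov-chain formulation, per-state departure lemma, coupling argument, and Little's-law conversion are precisely the machinery needed to close that gap, and your per-slot lemma ($E[D_t \mid N_t = n]$ at least as large under HK, with equality for $n>M$) is correct. One caution on the soft spot you rightly flag: expected-departure dominance is weaker than what the coupling needs. To propagate $N_t^{HK}\le N_t^{H1}$ along sample paths you need the survivor count $n-D(n)$ to be stochastically monotone in $n$ and the departure distributions (not merely their means) to be ordered at equal states; the latter is not automatic when HK switches to $K^*>1$, since maximizing the per-device success probability could in principle reorder the tails of the departure count even while raising its mean. The drift-comparison fallback has the same issue: a uniformly more negative one-step drift yields a smaller stationary mean only with additional structure (a Lyapunov or stochastic-ordering argument). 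So your proposal is the paper's skeleton fleshed out with genuinely more substance, but its delicate step remains a sketch rather than a proof --- which, to be fair, still places it well ahead of the paper's own two-line treatment.
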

\begin{proof}
Immediately follows from Proposition~1 and Proposition~2.
\end{proof}

We note however that in practice the exact number of active MTC devices is unknown, and thus all the feasible channel access algorithms will necessarily result in higher delay than the one that the optimal algorithm can achieve. Our further goal is to propose such a practical scheme, which would be comparable to the lower delay estimate and yet technically feasible to implement in the real wireless systems.



\section{Practical control algorithm}

In this section, we detail the practical control procedures for each of the two considered classes. We remind that for estimating the unknown number $N_t$, the BS may observe and exploit the following four parameters: (i) the number of idle channels $i_t$ that no MTC device selects, (ii) the number of busy channels $s_t$ that experience one transmission, (iii) the number of busy channels $c_t$ that experience a collision, and (iv) the number of successfully delivered messages $N_s$. 
Further, we assume that the system load per channel $\lambda = {\Lambda}/{M}$ may also be estimated at the BS side. 

First, let us describe a feasible practical implementation for the class of algorithms with single-replica transmission~\cite{galinina2013stabilizing} (i.e., Algorithm A1 in our notation).
\begin{prop} \textbf{Algorithm A1.}
For the system with no redundancy ($K_t = 1$), one of the feasible implementations may be derived through an auxiliary random process $Z_t$ and the corresponding probability $p_t = \min\left(1,{M}/{Z_t}\right)$, where $M$ is the number of channels and $Z_t$ adheres to the following evolution:
\begin{equation}
\begin{array}{c}
Z_{t+1}=\max\{1,Z_{t}+\Delta Z_{t}\},\\
\Delta Z_t = a \cdot i_t + b\cdot s_t + c\cdot c_t,
\end{array}
\end{equation}
while the constants $a<0,b,c>0$ are connected via the expression $c(e-2)+a+b = 0$~\cite{galinina2013stabilizing} and guarantee stability of the algorithm A1 for all $\lambda<e^{-1}$.
\end{prop}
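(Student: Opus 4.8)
The plan is to regard the pair $(N_t, Z_t)$ (true backlog and its estimate) as a Markov chain and to prove positive recurrence by a Foster--Lyapunov drift argument, after first showing that the stated constant relation $c(e-2)+a+b=0$ is exactly what makes $Z_t$ self-calibrate to $N_t$. So the argument splits into a \emph{tracking} part (the update $\Delta Z_t$ has zero conditional drift precisely when the estimate is correct) and a \emph{throughput} part (once the estimate is correct the backlog drains faster than it fills whenever $\lambda<e^{-1}$).

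First I would Poissonize the channel occupancy, which is legitimate under the infinite-population assumption. With each active device transmitting with probability $p_t=\min(1,M/Z_t)$ on a uniformly chosen channel, the number of attempts per channel is approximately Poisson with mean $g_t = N_t p_t/M$, which for $Z_t\ge M$ reduces to $g_t=N_t/Z_t$. Then each channel is idle, carries a single collision-free transmission, or collides with probabilities $e^{-g_t}$, $g_t e^{-g_t}$, and $1-e^{-g_t}-g_t e^{-g_t}$. Taking the conditional expectation of $\Delta Z_t=a\,i_t+b\,s_t+c\,c_t$ gives $\mathbb{E}[\Delta Z_t\mid N_t,Z_t]=M\,d(g_t)$ with
\begin{equation}
d(g) = c + (a-c)\,e^{-g} + (b-c)\,g\,e^{-g}.
\end{equation}

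Next I would verify the equilibrium. Substituting $g=1$ (i.e.\ $Z_t=N_t$) yields $d(1)=c+e^{-1}(a+b-2c)$, which vanishes exactly when $a+b=2c-ce$, that is, when $c(e-2)+a+b=0$ --- the stated constraint. Since $d(0)=a<0$ and $d(\infty)=c>0$, while $d'(g)=e^{-g}\bigl[\,b-a-(b-c)g\,\bigr]$ gives $d'(1)=e^{-1}(c-a)>0$ because $a<0<c$, the point $g=1$ is a stable zero of the drift: when $Z_t$ underestimates $N_t$ (so $g_t>1$) the estimate drifts upward, and when it overestimates ($g_t<1$) it drifts downward. This is the self-tuning property that pins the offered load per channel near one. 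At that operating point the collision-free success rate is the maximal $M e^{-1}$ per slot, whereas arrivals enter at rate $\Lambda=\lambda M$; hence for every $\lambda<e^{-1}$ the backlog drift $\lambda M - M e^{-1}$ is negative once $N_t$ is large. Combining the tracking drift with the backlog drift in a joint Lyapunov function (for instance $V(N,Z)=N+\kappa(Z-N)^2$) and invoking Foster's criterion --- precisely the stability framework of~\cite{galinina2013stabilizing} --- gives positive recurrence of $(N_t,Z_t)$, and thus stability of A1, on the whole region $\lambda<e^{-1}$.

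The main obstacle I expect lies in this final coupling step rather than in the equilibrium computation. The Poissonization is exact only asymptotically, and the two components evolve on intertwined timescales, so the delicate part is making ``$Z_t$ tracks $N_t$ closely enough'' quantitatively precise: one must control the estimation error $|Z_t-N_t|$ uniformly in the backlog so that the favorable throughput drift genuinely dominates the fluctuations of the estimator, while also handling the boundary regimes induced by the $\max\{1,\cdot\}$ and $\min(1,\cdot)$ clippings and the truncation needed to apply Foster's theorem. This is where the detailed construction of~\cite{galinina2013stabilizing} (the choice of $\kappa$, the state-space truncation, and the separate treatment of the $Z_t<M$ region) has to be imported rather than rederived.
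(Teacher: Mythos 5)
The paper offers no proof of this proposition: Algorithm A1 and its stability claim are imported wholesale from the cited reference \cite{galinina2013stabilizing}, so there is no in-paper argument to compare yours against. Your reconstruction of the design rationale is correct and your computations check out: with the Poissonized per-channel load $g = N_t p_t/M$, the conditional drift of $Z_t$ is $M\,d(g)$ with $d(g) = c + (a-c)e^{-g} + (b-c)g e^{-g}$, and the condition $d(1)=0$ is exactly $c(e-2)+a+b=0$; likewise the per-channel success rate $g e^{-g}$ is maximized at the operating point $g=1$, which yields the claimed region $\lambda < e^{-1}$. Two caveats. First, your assertion that $g=1$ is a globally attracting zero needs slightly more than $d(0)=a<0$, $d(\infty)=c>0$, and $d'(1)>0$: a priori $d$ could cross zero several times. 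It does not --- $d'(g)=e^{-g}\bigl[b-a-(b-c)g\bigr]$ changes sign at most once, at $g^{*}=(b-a)/(b-c)>1$ when $b>c$ (and never when $b\leq c$), so $d$ is unimodal with limit $c>0$ and hence has the single crossing at $g=1$ --- but this monotonicity argument should be stated rather than implied by the local derivative. Second, the genuinely hard part (a rigorous Foster--Lyapunov argument for the joint chain $(N_t,Z_t)$, uniform control of $|Z_t-N_t|$, the clipping boundaries, and the error of the Poisson approximation at finite $M$) is precisely what you defer to \cite{galinina2013stabilizing}; since the paper defers the \emph{entire} proposition to that same reference, your sketch is, if anything, more informative than what the paper itself provides, and the deferral is a legitimate match to the paper's own level of rigor rather than a defect.
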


We note that while yielding the lower delay estimate within its class, the algorithm H1 outperforms A1 or any other alternative procedure with no redundancy, and so does the algorithm HK. We further aim at developing a practical algorithm for the formulation with redundancy that would result in a lower delay than what A1 delivers. In order to do so, we exploit the situations when adding redundancy improves the probability of successful message transmission, i.e., $p_t = 1$, $K_t = K^*$. In contrast to the hypothetical optimal scheme, information on $N_t$ is unavailable here and needs to be adequately estimated as $\tilde N_t$, so that the predefined table $f(\tilde N_t,M,\gamma)$ would yield $K^*$, which is close to the optimal value.

For this purpose, let us apply the method of maximum likelihood, which allows us to select the most probable event $N$ based on the observed outcome $\mathbf{v} = <i,s,c>$ (for clarity, hereinafter we omit the indexes $t$). An appropriate estimation is based on the probability distribution $\Pr\{\mathbf{v}| N\}$ and appears to be rather cumbersome for the required `on-the-fly' optimization process. Instead, following the approach in~\cite{zanella2012estimating}, we employ a simpler approximation:
\begin{lem}
The point $N$ that delivers the maximum of likelihood may be tightly approximated by the point of maximum of the function:
\begin{equation}
\begin{array}{c}
 g(\mathbf{v},N) = \mu(N)^s e^{-\mu(N) M} \left( e^{\mu(N)}-1-\mu(N)\right)^c,
\end{array}
\end{equation}
where $\mu(N) = {NK}/{M}$ while $i$, $s$, and $c$ are the numbers of idle channels, successful replicas, and collisions, respectively. 
\end{lem}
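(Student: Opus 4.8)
The plan is to obtain $g$ as the kernel of a multinomial likelihood built on a Poisson occupancy approximation. First I would model the channel occupancy exactly: in the regime of interest every one of the $N$ active devices transmits ($p_t = 1$) and independently scatters its $K$ replicas over $K$ distinct channels chosen uniformly from the $M$ available ones. For a fixed channel, the event that a given device places a replica on it has probability $K/M$, and these events are independent across devices, so the number of replicas arriving at that channel is $\mathrm{Binomial}(N, K/M)$ with mean $\mu(N) = NK/M$. Treating the $M$ channels as (approximately) independent and letting $M$ grow with $\mu$ held fixed, each occupancy count converges to a $\mathrm{Poisson}(\mu)$ variable, so that the per-channel state probabilities become $P_0 = e^{-\mu}$ for an idle channel, $P_1 = \mu e^{-\mu}$ for a single replica, and $P_{\ge 2} = 1 - (1+\mu)e^{-\mu}$ for a collision.

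Under this independence approximation the observation $\mathbf v = \langle i, s, c\rangle$ is multinomial, so that
\begin{equation}
\Pr\{\mathbf v \mid N\} \approx \frac{M!}{i!\,s!\,c!}\, P_0^{\,i} P_1^{\,s} P_{\ge 2}^{\,c}.
\end{equation}
The multinomial coefficient does not depend on $N$, hence the maximizer over $N$ coincides with the maximizer of $P_0^{\,i} P_1^{\,s} P_{\ge 2}^{\,c}$. The remaining step is routine algebra: writing $P_{\ge 2}^{\,c} = e^{-\mu c}(e^{\mu} - 1 - \mu)^c$ and collecting the exponential factors gives $\mu^s e^{-\mu(i+s+c)}(e^{\mu}-1-\mu)^c$, after which substituting the constraint $i + s + c = M$ collapses the exponent to $e^{-\mu M}$ and yields exactly $g(\mathbf v, N) = \mu(N)^s e^{-\mu(N) M}(e^{\mu(N)} - 1 - \mu(N))^c$. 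Thus the point of maximum of $g$ is the approximate maximum-likelihood estimate of $N$, which (since $K$ and $M$ are known) is equivalent to locating the most probable $\mu$.

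The main obstacle is justifying the independence/Poissonization step, since the per-channel occupancy counts are genuinely dependent: the $K$ replicas of a single device occupy distinct channels (inducing negative correlation) and the total replica count is fixed at $NK$. The cleanest route is to argue in the large-$M$ regime, bounding the total variation distance between the true occupancy vector and the i.i.d.\ Poisson surrogate and showing it vanishes as $M \to \infty$ with $\mu$ fixed; this is precisely the approximation quality that the word \emph{tightly} in the statement refers to, and it mirrors the estimation device used in~\cite{zanella2012estimating}. Finally, I would remark that $\gamma$ does not enter $g$ because the estimator reads the physical occupancy profile $\langle i, s, c\rangle$ rather than the post-error delivery outcome, so channel corruption is immaterial to locating the argmax.
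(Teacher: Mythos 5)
Your proposal is correct and follows essentially the same route as the paper's own proof: a Poisson approximation of per-channel occupancy with intensity $\mu = NK/M$, a multinomial-type likelihood whose combinatorial prefactor is independent of $N$ (the paper's constant $C$), and the algebraic collapse via $i+s+c=M$ to obtain $g(\mathbf{v},N)$. Your additional discussion of the binomial-to-Poisson limit, the negative-correlation issue in the occupancy vector, and the irrelevance of $\gamma$ is more explicit than the paper's one-line treatment, but it is elaboration of the same argument rather than a different one.
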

\begin{proof}
Assuming that the number of channels is large enough, we may approximate the number of replicas that ``choose'' the channels by independent random variables distributed according to a Poisson process with the intensity of ${N K}/{M}$. Using this Poisson approximation, we may estimate the probabilities to observe $i$ idle channels, $s$ successful replicas, and $c$ collisions as $e^{-\mu  \cdot i}$, $\left(\mu e^{-\mu} \right)^s$, and $(1-e^{-\mu}-\mu e^{-\mu})^c$, correspondingly. Bringing all the three together, we may write:
\begin{equation}
\begin{array}{c}
\Pr\{\mathbf{v}| N\}  \approx C\mu^s e^{-\mu (i+s+c)} \left( e^{\mu}-1-\mu\right)^c, \label{eqn:probability}
\end{array}
\end{equation}
where $C$ is the constant number of combinations for a given set $<i,s,c>$. The maximum of (\ref{eqn:probability}) is located at the same point as the maximum of $g(\mathbf{v},N)$. 
\end{proof}

We emphasize that $g(\mathbf{v},N)$ resembles the shape of $\Pr\{\mathbf{v}| N\}$ (which is sufficient for the purposes of our optimization), but does not have the meaning of probability. The optimum for $c>0$ may be obtained similarly to~\cite{zanella2012estimating} as a numerical solution to the following equation (follows from the zero gradient $g'_{\mu}(\mathbf{v},N)=0$):
\begin{equation}
\begin{array}{c}
 c \mu \left( e^{\mu}-1\right)-\left( \mu M-s\right)\left( e^{\mu}-1-\mu\right) = 0.  
 \end{array}\label{eqn:zerogradient}
\end{equation}

Hence, given the point $\mu^*$ obtained from the expression (\ref{eqn:zerogradient}) and the number of successful replicas $s$, the number of currently active MTC devices may be estimated as follows:
\begin{equation}
 \tilde N_{t} = \left \{
 \begin{array}{c}
\left [ \frac{1}{p} \mu^* \frac{M}{K} + \lambda M \right] - N_s, \quad \text{for $c>0$}\\
\left[ \frac{1}{p} s \frac{1}{K}+ \lambda M\right] - N_s, \quad \text{for $c=0$},
 \end{array} 
 \right. \label{eqn:algHK}
\end{equation}
where $p = p_{t-1}$ is the transmission probability and $K = K_{t-1}$ is the number of replicas in the previous slot. Based on the obtained estimator, we may finally formulate the sought practical algorithm that exploits message transmission redundancy.

\begin{prop}\textbf{Algorithm AK.}
For the system with the use of redundancy ($K_t \geq 1$), one of the practical algorithms may be constructed on the set of control parameters: $p_{t} = \min \left(1,{M}/{\tilde N_{t}} \right)$, $K_{t} = K^* = f(\tilde N_{t},M,\gamma)$, where $\tilde N_{t}$ is delivered by the expression (\ref{eqn:algHK}).
\end{prop}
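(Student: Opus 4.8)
The plan is to show that the construction in the statement is both well-defined and a faithful feasible surrogate for the hypothetical algorithm HK of Proposition~2. Since HK is already established as delay-optimal, it suffices to justify two things: that the quantity $\tilde N_t$ in~(\ref{eqn:algHK}) is a sound estimate of the number of currently active devices, and that substituting this estimate into the HK control law yields a procedure implementable entirely from quantities the BS can observe.

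First I would recover the estimator~(\ref{eqn:algHK}) directly from Lemma~1. The maximum-likelihood value of the per-channel intensity is the root $\mu^*$ of the zero-gradient condition~(\ref{eqn:zerogradient}); inverting the defining relation $\mu = NK/M$ returns the number of devices that transmitted in the previous slot as $\mu^* M / K$. Because only a fraction $p = p_{t-1}$ of the backlog actually transmitted (each sending $K = K_{t-1}$ replicas), I would rescale by $1/p$ to recover the previous backlog, then form the standard balance $N_t = N_{t-1} + (\text{arrivals}) - (\text{departures})$ by adding the expected fresh arrivals $\lambda M$ and subtracting the successful devices $N_s$. This reproduces the $c>0$ branch. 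In the degenerate case $c=0$ the condition~(\ref{eqn:zerogradient}) no longer determines $\mu^*$, so I would instead read the intensity off the number of singletons $s$, which gives the second branch.

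Substituting $\tilde N_t$ for the oracle value $N_t$ in the HK parameters of Proposition~2 then immediately produces the stated pair $p_t = \min(1, M/\tilde N_t)$ and $K_t = f(\tilde N_t, M, \gamma)$. Feasibility is clear, since every input to~(\ref{eqn:algHK})—the counts $i$, $s$, $c$, $N_s$, the load $\lambda$, and the previous-slot controls $p_{t-1}, K_{t-1}$—is observable or known at the BS, and $f$ is a precomputed table.

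The main obstacle will be the claim implicit in the word \emph{approaches}: that AK recovers HK rather than merely resembling it. This requires showing that the Poisson approximation underlying Lemma~1 becomes tight as $M$ grows, so that $\tilde N_t \to N_t$ and the control law converges to the optimal one, and additionally that the closed-loop recursion for $\tilde N_t$ remains stable—neither collapsing to its floor nor diverging—by an argument analogous to the stability condition $c(e-2)+a+b=0$ established for A1 in Proposition~3.
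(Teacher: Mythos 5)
Your proposal is correct and takes essentially the same route as the paper: the paper offers no separate proof for this proposition, treating it as a construction that follows from the maximum-likelihood estimator of Lemma~1, the zero-gradient condition~(\ref{eqn:zerogradient}), and the resulting estimate~(\ref{eqn:algHK}), which is then substituted for the oracle value $N_t$ in the control law of Proposition~2 --- precisely your derivation, including the $c>0$ and $c=0$ branches and the feasibility observation. Your closing caveat is likewise consistent with the paper, which does not prove that AK approaches HK or that it is stable; it explicitly concedes possible instability of AK (declaring a rigorous treatment out of scope) and addresses it only via the modified algorithm of Proposition~4a.
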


We have established that contrary to the provably stable algorithm A1, \textcolor{black}{the algorithm AK may demonstrate instability due to the usage of the estimate $\tilde N$ (rigorous proof of such instability is a stand-alone problem out of scope of this paper), and hence may result in unpredictably high delay values for the range of loads that are close to the multi-channel system capacity} $Ke^{-1}$~\cite{galinina2013stabilizing}. To avoid that, we propose taking advantage of the algorithm A1 by applying $p_t = \min \left(1,{M}/{Z_t}\right)$ whenever the estimated number of transmitting MTC devices increases.

\textcolor{black}{
\textbf{Proposition 4a.} \textbf{Modified algorithm AK.} 
\textit{Algorithm \vspace{-5px}
\begin{equation}
\vspace{-5px}
\left\{
\begin{array}{l}
p_{t} = 1,\quad K_{t} = f(\tilde N_{t},M,\gamma),\quad\text{ if $\tilde N_{t}<M$},\\
p_{t} = \min \left(1,{M}/{Z_t} \right), \quad K_{t} = 1, \quad\text{ if $\tilde N_{t} \geq M$},
\end{array} \right.
\end{equation}
delivers lower channel access delay due to the use of redundancy, but remains \textit{stable} at higher loads. Stability follows from the properties of the algorithm in~\cite{galinina2013stabilizing}, particularly, from the ergodicity of two-dimensional process $(Z_t,N_t)$. }}

Interestingly, our proposed estimator becomes tighter with the increasing number of channels, since the Poisson approximation (as demonstrated in Section~VI) becomes more exact when $M \rightarrow \infty$. The corresponding expressions for the performance limits are derived in the following section.


\section{Limiting expressions}

In this section, we study the performance limits under $M \rightarrow \infty$, which approximate the numerical results for a higher number of channels, while constituting the \textit{ultimate lower bound} for both classes of algorithms.

We begin with considering the algorithm H1 and formulate the following Theorem for the limit \textcolor{black}{on the average number} of backlogged MTC devices in the system.
\begin{thm}\textbf{Algorithm H1}
For the system with no redundancy ($K_t = 1$), given the channel corruption probability $\gamma$ and the load per channel $\lambda < e^{-1}$, the limit on the average number of backlogged MTC devices per channel for $M \rightarrow \infty$ may be obtained as follows:\vspace{-10px}
\begin{equation}
\vspace{-5px}
\begin{array}{c}
\eta^* = -W\left(-\frac{\lambda}{1-\gamma}\right)-\lambda,\label{eqn:alg_H1}
\end{array}
\end{equation}
where $W(x)$ is the Lambert function, which is a solution to the transcendent equation $\eta e^{\eta} = x$. 
\end{thm}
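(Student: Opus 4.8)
The plan is to reduce the steady-state behavior of the infinite-channel system to a single scalar fixed-point equation and then invert it with the Lambert function. First I would work per channel and denote by $\eta$ the sought steady-state average number of backlogged devices per channel, so that the total population contending in a typical slot, per channel, is $\eta + \lambda$ (the backlog carried over plus the fresh Poisson arrivals of rate $\lambda$). Because $\lambda < e^{-1}$ keeps the system in the lightly loaded regime, I would argue that the operating point of algorithm H1 is $p_t = 1$ (equivalently $N_t \le M$), so that every contending device transmits and the mean number of attempts directed at a fixed channel equals $\mu = \eta + \lambda$.

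Next I would invoke the same Poisson approximation that underlies Lemma~1, valid as $M \to \infty$: the number of single-replica attempts landing on a given channel converges to a Poisson random variable with mean $\mu$. Under this approximation a channel yields a delivery precisely when exactly one device selects it, which happens with probability $\mu e^{-\mu}$, and when the surviving replica is not corrupted, which happens with probability $1-\gamma$. Hence the expected number of successful deliveries per channel per slot is $\mu e^{-\mu}(1-\gamma)$.

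The heart of the argument is the steady-state flow-balance condition: in equilibrium the departure rate equals the arrival rate, giving $\mu e^{-\mu}(1-\gamma) = \lambda$. I would rewrite this as $-\mu e^{-\mu} = -\lambda/(1-\gamma)$ and apply the defining property $\eta e^{\eta} = x \Leftrightarrow \eta = W(x)$ to obtain $\mu = -W\!\left(-\lambda/(1-\gamma)\right)$, taking the principal branch so that $\mu \in (0,1)$ is the stable, physically meaningful root rather than the overloaded one. Since $\mu = \eta + \lambda$, subtracting $\lambda$ delivers the claimed $\eta^* = -W\!\left(-\lambda/(1-\gamma)\right) - \lambda$. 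A real root exists only when $\lambda/(1-\gamma) \le e^{-1}$, which the load condition guarantees and which simultaneously selects the correct branch.

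The main obstacle I expect is not the algebra but the justification of the two limiting steps: that the per-channel attempt count is asymptotically Poisson, and, more delicately, that the empirical backlog concentrates so that a single deterministic flow-balance equation governs the $M \to \infty$ average. Making this rigorous would call for a law-of-large-numbers or mean-field argument across the $M$ channels, together with an interchange of the ergodic (time) average and the $M \to \infty$ limit, and a consistency check that the self-consistent solution indeed satisfies $\eta + \lambda < 1$ so that the assumed $p_t = 1$ regime holds throughout.
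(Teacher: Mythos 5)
Your argument is correct and follows essentially the same route as the paper's proof: the per-channel Poisson approximation as $M \to \infty$, the flow-balance equation $\mu e^{-\mu}(1-\gamma) = \lambda$, Lambert inversion on the principal branch, and subtraction of the fresh-arrival share $\lambda$ (the paper simply writes $\eta$ for your $\mu$ and is less explicit about branch selection). One minor caution: a real root exists only if $\lambda \le (1-\gamma)e^{-1}$, which the hypothesis $\lambda < e^{-1}$ does not by itself guarantee when $\gamma > 0$ — an imprecision shared by the theorem statement itself and consistent with the stability region $\lambda < (1-\gamma)e^{-1}$ mentioned later in the paper's numerical section.
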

\begin{proof}
As in the previous section, we employ the Poisson approximation by assuming that for the large values of $M$ in the stationary mode the flows of newly activated and backlogged MTC devices form a Poisson process with some intensity $\eta$. The probability that one transmitted replica avoids a collision in the selected channel is therefore defined by $e^{-\eta}$. In the stationary mode, the output flow equals the system load and $\eta e^{-\eta} (1-\gamma)= \lambda$. The sought solution may thus be found as $\eta^* = -W\left(-\frac{\lambda}{1-\gamma}\right)$, which however contains the share $\lambda$ of newly arriving MTC devices. By subtracting the latter from $\eta^*$, we establish the target expression (\ref{eqn:alg_H1}).
\end{proof}
Following the same logic for the limit on the backlogged MTC devices imposed by the algorithm HK, we arrive at the following Theorem.\vspace{-5px}
\begin{thm}\textbf{Algorithm HK}
For the system with the use of redundancy ($K_t \geq 1$), given the channel corruption probability $\gamma$ and the load per channel $\lambda < e^{-1}$, the average target number of replicas $K^*$ and the \textcolor{black}{limit on the average number} of the backlogged MTC devices for $M \rightarrow \infty$ are given as:\vspace{-5px}
\begin{equation}\vspace{-5px}
\begin{array}{c}
K^* = \arg \max_{K} h(K), \quad \eta^* = h(K^*)-\lambda,
\end{array} \vspace{-3px}
\end{equation}
where $\eta = h(K)$ is solution to the equation:\vspace{-5px}
\begin{equation}\vspace{-5px}
\begin{array}{c}
\lambda = \eta \left[1-\left(1-(1-\gamma)  e^{-K\eta}   \right)^K \right].
\end{array}
\end{equation}
\end{thm}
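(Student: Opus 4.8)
The plan is to reproduce the mean-field (Poisson) balance argument that established the limiting expression for Algorithm~H1, now accounting for the $K$ simultaneously transmitted replicas. As before, I would assume that for $M\to\infty$ the superposition of the newly activated and the backlogged devices attempting in a given slot converges to a Poisson flow, and I would let $\eta$ denote its per-channel intensity. Since each of the $\eta M$ attempting devices scatters $K$ identical replicas across the $M$ channels, the induced per-channel \emph{replica} intensity is $K\eta$, and by the same thinning/independence approximation already used in Lemma~1 the occupancies of distinct channels become asymptotically independent Poisson$(K\eta)$ variables.

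Next I would compute the probability that a tagged device succeeds. One of its replicas avoids a collision exactly when no competing replica lands in the same channel, which happens with probability $e^{-K\eta}$; conditioned on surviving the collision, it is still corrupted with probability $\gamma$, so a single replica is delivered with probability $(1-\gamma)e^{-K\eta}$. Treating the fates of the $K$ replicas (sent over $K$ distinct channels) as independent, the message fails only if all replicas fail, which gives
\begin{equation}
p_s(K,\eta) = 1-\bigl(1-(1-\gamma)e^{-K\eta}\bigr)^{K}.
\end{equation}
Imposing stationary flow conservation — the per-channel departure rate equals the offered load — yields $\eta\,p_s(K,\eta)=\lambda$, i.e.
\begin{equation}
\lambda = \eta\Bigl[1-\bigl(1-(1-\gamma)e^{-K\eta}\bigr)^{K}\Bigr],
\end{equation}
which is precisely the fixed-point equation defining $\eta=h(K)$. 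Setting $K=1$ collapses this to $\eta e^{-\eta}(1-\gamma)=\lambda$ and recovers the Algorithm~H1 limit, a useful consistency check.

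It then remains to fix the number of replicas. In accordance with Proposition~2 and the optimization~(\ref{eqn:opt}), the hypothetical algorithm selects, at each operating intensity, the replica count that optimizes the delivery performance; in the stationary regime this is the $K$ that extremizes $h(K)$, after which the limiting per-channel backlog follows by subtracting the share $\lambda$ of fresh arrivals from the total intensity, giving $\eta^{*}=h(K^{*})-\lambda$ exactly as in the H1 derivation. At low load $h(K)\approx\lambda/(1-\gamma^{K})$, making the monotonicity of $h$ in $K$ transparent and the extremization concrete.

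The main obstacle is the fixed-point step rather than the bookkeeping. The map $\eta\mapsto\eta\,p_s(K,\eta)$ rises from $0$, attains an interior maximum, and decays back to $0$, so for $\lambda$ below the per-$K$ capacity the balance equation has two roots; I must argue that the physically relevant, drift-stable branch is the one that defines $h(K)$, verify that $h(K)$ is well defined and smooth in $K$ over the feasible range, and carefully settle the direction of the extremum so that $K^{*}$ indeed yields the minimal backlog. A secondary delicate point is justifying the independence of the $K$ replica fates and of the channel occupancies, which is only asymptotically exact as $M\to\infty$ and underpins both the product form $(\,\cdot\,)^{K}$ and the Poisson$(K\eta)$ collision probability; turning that heuristic into a rigorous $M\to\infty$ limit is where the argument is least routine.
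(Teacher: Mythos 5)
Your proposal follows essentially the same route as the paper's proof: the Poisson mean-field approximation with per-channel replica intensity $K\eta$, the per-replica success probability $(1-\gamma)e^{-K\eta}$, the independence product form for message failure, the flow-balance equation $\lambda = \eta\, p_s(K,\eta)$, and finally the extremization over $K$ with the backlog obtained by subtracting the fresh-arrival share $\lambda$. The subtleties you flag --- the two roots of the fixed-point equation, the selection of the drift-stable branch, and the direction of the extremum (note the paper asserts $K^*=\arg\max_K h(K)$, whereas minimizing the backlog on the stable branch would call for $\arg\min$) --- are left equally unaddressed by the paper's own one-paragraph proof, so your attempt is, if anything, more careful than the original.
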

\begin{proof}
Similarly, the flows of newly activated and backlogged MTC devices may be considered equal to $\eta$, such that the corresponding intensity of the flow of replicas equals $K\eta$. The probability that one transmitted replica is delivered successfully can be established as $(1-\gamma) \left(  e^{-K\eta}  \right)$, and hence the message success probability equals $1-\left(1-(1-\gamma)  e^{-K\eta}  \right)^K$. From the equality of the output and input flows, the following equation follows:\vspace{-5px}
\begin{equation}
\vspace{-8px}
\begin{array}{l}
\lambda = \eta \left[1-\left(1-(1-\gamma)  e^{-K\eta}  \right)^K \right]
\end{array}. \label{eqn:temp}
\end{equation}
If $\eta^*$ is the solution to the equation (\ref{eqn:temp}), then the strategy of the algorithm HK corresponds to the maximum of $\eta^*$ over all possible $K$, which results in selecting $K^* = \arg \max_{K} h(K)$, while the backlog of $\eta^* = h(K^*)-\lambda$.
\end{proof}



\vspace{-10px}
\section{Numerical results}
\vspace{-5px}
In this section, we compare the proposed practical control \textit{algorithm AK} against the hypothetical \textit{algorithm HK} as well as observe performance of the hypothetical \textit{algorithm H1} and the practical control \textit{algorithm A1} that do not add redundancy, to provide more insight. As a metric of interest equivalent to the channel access delay, we study the number of backlogged MTC devices per channel (i.e., the devices activated in the previous slots), which immediately translates into the waiting time through the Little's law. 

In Fig.~\ref{fig:figure1}, we illustrate the evolution of the backlog size for all four considered procedures (for $M=10$), as well as highlight the lower bounds offered in Theorems 2 and 3 (i.e., the highlighted areas). Importantly, 
the use of multi-replica transmission in the case of error-free channel ($\gamma = 0$) reveals rather marginal performance gains (see the difference between AK and H1); however, we may still exploit the practical algorithm A1, as it demonstrates performance that is close to that of the hypothetical single-replica scheme H1. Conversely, as $\gamma$ increases, the benefits of utilizing multiple replicas become more substantial, especially in the region of low channel loads (improvement of up to 10 times). For any $\gamma$, the difference between all four algorithms diminishes when $\lambda$ grows within its stability region $\lambda<(1-\gamma)e^{-1}$, as redundancy is no longer helpful then.


\begin{figure} [!ht]
  \begin{center}
  \vspace{-15px}
    \includegraphics[width=0.7\columnwidth]{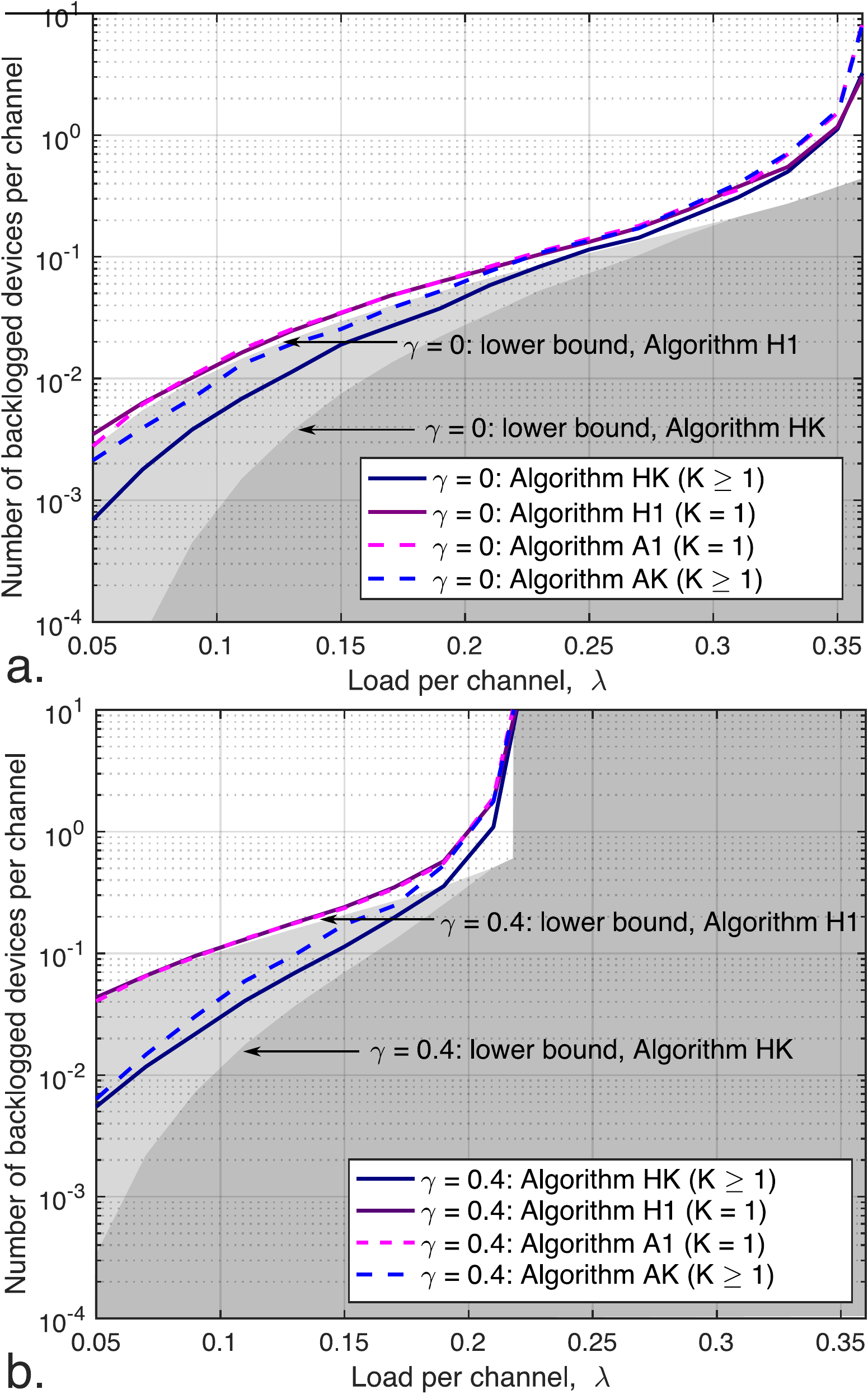}
    \vspace{-10px}
  \end{center}
  \caption{Evolution of the number of backlogged MTC devices per channel vs. system load per channel: (a) error-free channel and (b) error-prone channel.}
  \vspace{-20px}
  \label{fig:figure1}
\end{figure} 

In Fig.~\ref{fig:figure3}, we further demonstrate the behavior of the proposed ultimate lower bounds provided by Theorems 2 and 3 in terms of the channel backlog size, as well as emphasize the average optimal numbers of replicas. Intuitively, with improved channel operation (i.e., $\gamma \rightarrow 0$), the number of backlogged MTC devices decreases down to a certain threshold. 
We note that our proposed solution HK always resides in the region between the dashed and the solid lines for any $\gamma$ and $\lambda$, while approaching the latter as $M$ increases. 


\vspace{-4px}
\section{Conclusions}
\vspace{-4px}

We have shown that the channel access delay could be decreased without the use of physical layer features and is most beneficial for an error-prone channel. The proposed practical algorithm (given by Proposition 4a) makes it possible to dynamically control the system and achieve performance close to what the hypothetical algorithm may offer. At higher loads, our solution by construction becomes a provably stable algorithm for any $M$. Finally, we provide a simple lower bound for the resulting performance.

\begin{figure} [!ht]
\vspace{-10px}
  \begin{center}
    \includegraphics[width=0.8\columnwidth]{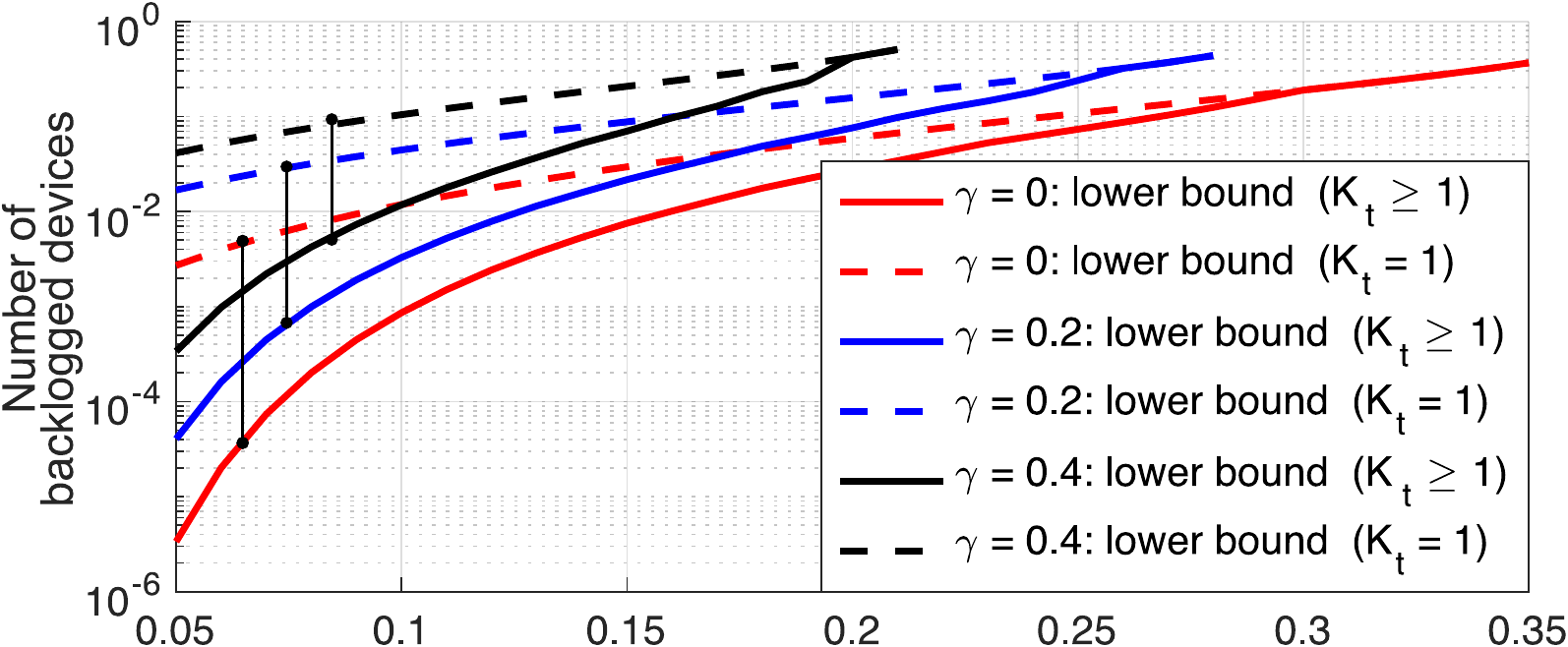}
    \vspace{-10px}
  \end{center}
  \caption{Asymptotic performance ($M \rightarrow \infty$) of the hypothetical algorithms H1 and HK, $\gamma = 0, 0.2, 0.4$: average number of backlogged MTC devices vs. load per channel.}
  \label{fig:figure3}
  \vspace{-10px}
\end{figure} 

\vspace{-4px}
\section*{Appendix: Probability of success in one slot}


Let us consider a particular contention slot at the moment $t$ by assuming that $N$ MTC devices are currently active and may potentially transmit their replicas in this time interval. Analyzing the tagged device, which has made a decision to transmit, we may reformulate our sub-problem in terms of the classical occupation problem for the rest of $N-1$ participants. Following the standard techniques~\cite{feller1968introduction}, we may calculate 
%
the probability that exactly $M-n$ channels are occupied if there are $M-n$ channels in total (i.e., following from the standard expression for a union of the events):
\vspace{-5px}
\begin{equation}
\begin{array}{c}
\!\!\!\!\!\!p_0(M\!-\!n,N\!-\!1)\! =\!\!\!\! \sum \limits_{v=0}^{M-n-K}(-1)^v \! { {M-n} \choose v}  \left[\frac{{ {M-n-v}\choose K }}{{ {M-n} \choose K}}\right]^{N-1} \!\!\!, 
\end{array} \nonumber
\vspace{-5px}
\end{equation}
for $K\leq M-n \leq K(N-1)$ (since $K$ channels are always occupied and no more than $K(N-1)$ can be selected by $N-1$ devices), or otherwise $p_0(M-n,N\!-\!1) = 0$. 

The probability that \textit{exactly} $n$ out of $M$ channels are available (or ``empty'', i.e., do not contain any replica by other devices) or, equivalently, \textit{exactly} $M\!-\!n$ are occupied by $N\!-\!1$ devices, is given as follows:
\vspace{-5px}
\begin{equation}\vspace{-5px}
\begin{array}{c}
p_n(N\!-\!1)\! =\! 
{M \choose n} \left[ \frac{{M-n \choose K}}{{M \choose K }}\right]^{N-1} p_0(M-n,N\!-\!1)\! = 
\\
{M \choose n} \sum \limits_{v=0}^{M-n-K}(-1)^v { {M-n} \choose v}  \left[\frac{{ {M-n-v}\choose K }}{{M \choose K}}\right]^{N-1},
\end{array} 
\vspace{-2px}
\end{equation}
for $n \in [\max(0,M\!-\!K(N\!-\!1)),M\!-\!K]$ and $p_n(N\!-\!1)\!=0$ otherwise.

Let us assume that \textit{exactly} $n \in [\max(0,M\!-\!K(N\!-\!1)),M\!-\!K]$ channels are available after $N\!-\!1$ contenders have made their decisions. Therefore, the probability of $K_f$ failures in a series of $K$ transmissions for the tagged device may be estimated as the probability to fall into already selected $M-n$ channels, while other $K-K_f$ fall into $n$ available channels. Further, $M-K_f$ replicas that have successfully avoided collisions with other contenders may be unsuccessfully received due to the error-prone channels.  Given the corruption probability of $\gamma$ per each of them, we may estimate the conditional probability that the entire message is lost as:
\vspace{-5px}
\begin{equation}
\begin{array}{c}
\!\!p_{\text{lost}|n} \!=\!\! \Pr\{\text{message lost}|n \text{ channels empty}\}\! = \!\!\\
\!\! \sum  \limits_{K_f=\max(0, K-n)}^{K}  {K\choose K_f} \gamma^{K-K_f} \left( \frac{M-n}{M} \right)^{K_f} \left( \frac{n}{M} \right)^{K-K_f} , \!\!
\end{array}\vspace{-5px}
\end{equation}
and thus the probability of successful message delivery may be approximated by:
\vspace{-5px}
\begin{equation}\vspace{-5px}
\begin{array}{c}
\!\!\!\!p_{\text{s}}(N,M,\gamma,K)\! 
=
1- \sum  \limits_{n=n_{0}}^{M-K} p_{\text{lost}|n}\cdot p_n(N-1),
\end{array}
\vspace{-5px}
\end{equation}
where $n_0\! =\! \max(0,M\!-\!K(N\!-\!1))$ and $p_{\text{s}}(1, M,\gamma,K) =1\!-\!\gamma^K$.


\vspace{-5px}
\bibliographystyle{ieeetr}

\bibliography{refs2}
\end{document}